\documentclass[12pt]{article}

\usepackage{amsmath,amsfonts,amssymb,amsthm,textcomp,mathrsfs,mathrsfs,bbm}
\usepackage{braket,marvosym}
\usepackage{mathtools}
\usepackage[utf8]{inputenc}
\usepackage{graphicx}
\usepackage{color,xcolor}
\usepackage[colorlinks=true,linkcolor=black,citecolor=black,plainpages=false,pdfpagelabels]{hyperref}
\usepackage{palatino}
\usepackage[T1]{fontenc}
\usepackage{fullpage}

\newtheorem{theorem}{Theorem}

\newtheorem{definition}{Definition}

\newcommand*{\eye}{{\mathbbm{1}}}

\newcommand*{\cE}{\mathcal{E}}
\newcommand*{\cT}{{\mathcal{T}}}

\newcommand*{\cR}{\mathcal{R}}
\newcommand{\beq}{\begin{equation}}
\newcommand{\enq}{\end{equation}}

\newcommand{\tr}{\mathrm{Tr}}

\newcommand*{\cV}{\mathcal{V}}
\newcommand*{\cU}{\mathcal{U}}

\newcommand*{\renyi}{R\'{e}nyi }

\newcommand{\mapone}{\cT_W}

\newcommand{\err}{{\text{error}}}

\usepackage[colorlinks=true,linkcolor=black,citecolor=black,plainpages=false,pdfpagelabels]{hyperref}
\hypersetup{pdftitle={Template}}
\usepackage{color}

\allowdisplaybreaks[2]

\begin{document}

\newcommand{\itwomax}{{^2I}_{\max}}

\title{Comment on "Conditional Decoupling of Quantum Information"}
\author{Naresh Sharma\let\thefootnote\relax\footnote{Email: {\tt n.kumar.sharma@gmail.com}} \\
Pathankot 145001 \\
Punjab, India}
\date{\today}
\maketitle

\begin{abstract}
Berta \emph{et al}
[\href{http://dx.doi.org/10.1103/physrevlett.121.040504}{{\color{blue} Phys. Rev.
Lett., {\bf 121}, 040504 (2018)}}]
claim that their result provides a conceptually
new extension of the decoupling approach to quantum
information theory. We provide an alternate proof
using the plain-vanilla decoupling approach for the
achievable rates of their main result and hence, their
claim is unwarranted, and the title can be misleading
when taken in conjunction with the claim.
\end{abstract}

Berta \emph{et al} \cite{cond-corr-2018}
analyze two protocols namely
deconstruction and conditional erasure, differing
only in error constraints, closely
related to the one in \cite{corr-pra-2005}, for
which the decoupling approach works
\cite{sharma-decoupling-2015}, and make the
following claims:
\begin{itemize}
\item \emph{Our models for deconstruction
and conditional erasure extend the decoupling
approach to quantum information theory.}
\item \emph{Our result can alternatively be read as a conditional
decoupling theorem and hence provides a conceptually
new extension of the decoupling approach to quantum
information theory.}
\end{itemize}
These claims are made without qualifications and one 
could at least weakly interpret them as saying that 
their result is not merely to show that an
approach powered by the Quantum State Redistribution
(QSR) \cite{devetak-state-redist-2008, qsr-pra-2008} 
works for their protocols, moderately interpret
them as saying that their result is
beyond what a mere decoupling approach can provide,
and strongly interpret them as saying that there
is something fundamental about "conditional
decoupling". The strong interpretation looks quite 
plausible considering
the audacity and scope of the claims when taken
along with the title.

Clearly, the possibility of a
"conceptually new extension" of a method
by considering two example protocols arises only if
the method cannot address the protocols. There is no
proof provided in \cite{cond-corr-2018} that it is
impossible to analyze their protocols by using the
decoupling approach. Furthermore, the exploitation
of the QSR in \cite{cond-corr-2018} for their
achievable rates is by itself a ratification of
the decoupling approach since it can be used for QSR
as shown in \cite{sharma-decoupling-2015} rendering
the claims
inscrutable since \cite{sharma-decoupling-2015}
pre-dates \cite{cond-corr-2018}.

The authors of \cite{cond-corr-2018} harp
on the subtle differences between the two protocols.
Notice these sentences from \cite{cond-corr-2018}:
\begin{itemize}
\item[] \emph{We would like to emphasize again that 
deconstruction
and conditional erasure protocols are more delicate than
standard decoupling, the latter sometimes described as
having the relatively indiscriminate goal of destruction
[38]. That is, a straightforward application of the
decoupling method is too blunt of a tool to apply in a
state deconstruction protocol.}
\end{itemize}
Ultimately, the true test of any claims or insights is 
the proof itself, and in its absence, one cannot
fathom where this lack of faith in decoupling and a
leap of faith in something beyond decoupling are
coming from.

The definition of deconstruction requires the notion
of the optimal permissible recovery map as
solution to the problem: for a tripartite quantum
state $\rho^{ABR}$, find a map $\cR^{B \to AB}$ such that
$\cR(\rho^{BR})$ is closest to $\rho^{ABR}$.

Even if the benefit of doubt is given to the authors
in the second sentence above on the
deconstruction protocol that the recovery map one 
gets by leveraging the conditional erasure protocol
is trivial giving rise to a tensor product state,
the fact remains that this trivial map forms the basis
in \cite{cond-corr-2018} to show the achievable
rates only for the conditional erasure and the
converse takes care of the fact that it is also
optimal for the deconstruction. If the authors had
shown that the optimal rates for the deconstruction
and the
conditional erasure are different perhaps due to
certain vagaries of the optimal recovery map or,
even less, that
the trivial map one gets by the leveraging the
conditional erasure was insufficient for providing the 
optimal rate for the deconstruction and this can only be
rectified by something beyond the "blunt" decoupling,
then the second sentence may be valid. (Clearly, for
the vagaries of the optimal recovery maps to kick in,
the problem statement would have to change, but we
digress.) No such
thing was proved and a \emph{slight}
issue here is that the aforesaid optimal
rates being exactly the same is the main result of
\cite{cond-corr-2018}!

In general, mankind knows little to nothing
about the optimal 
recovery maps and in this realm of ungloomy ignorance 
where we somehow survive, to make statements that 
seemingly pertain to the other unknown realm
to pulverize a method when the proofs, including
those in \cite{cond-corr-2018}, elucidate no such
possibility is verbal voodoo, quantum or not!

We provide the achievable rates for the protocols by 
the good-old decoupling approach.
This implies that their aforesaid claims, no matter which
of the weak to strong interpretations one clings to,
and statements are uncalled-for.

Perhaps there
is an exciting new concept of "conditional decoupling" 
out there waiting to be discovered, but
\cite{cond-corr-2018} fails to deliver!

The proof is similar to the one in Section 11
in \cite{sharma-decoupling-2015}. $|A|$ is the
dimension of $A$, $\pi^A$ is the maximally mixed
state on $A$, for $\rho^{AR}$,
$H_{\alpha}(A|R)_\rho$ is the \renyi
quantum conditional entropy of $A$ given $R$
and since there are several options, one could pick one's
favorite --- see \cite{sharma-decoupling-2015}, the
von Neumann entropies for $\alpha = 1$ are denoted
by $H(A|R)_\rho$, for $\rho^{ABR}$, the conditional 
mutual information between $A$ and $R$ given $B$ is
given by $I(A;R|B)_\rho = H(A|B)_\rho - H(A|RB)_\rho$,
$\|\cdot \|_1$ is the trace norm, and
$\Xi(\varepsilon) \equiv \sqrt{\varepsilon (2 + \varepsilon + 2 \sqrt{1+
\varepsilon})}$ for $\varepsilon \geqslant 0$. For
$\Psi^{ABRE}$
pure, there are duality relations such that
$H_\alpha(A|RE)_\Psi = -H_{\widetilde{\alpha}}(A|B)_\Psi$
and the relationships between $\alpha$ and
$\widetilde{\alpha}$ are provided in
\cite{tomamichel-duality-2013} and references therein.
Define
\beq
\mapone^{A \to B}(\sigma^A) \equiv \frac{|A|}{|B|} (W^{A \to B} \cdot \sigma^A),
\enq
where $W^{A \to B}$, $|A| \geqslant |B|$, is a full-rank partial isometry.

\begin{definition}[Deconstruction and
Conditional erasure]
Consider $n$ copies of a tripartite state $\rho^{ABR}$, with $M$ Unitaries $U_i$, $i=1,...,M$, applied over
$A ^nB^n$ such that
\beq
\Upsilon^{A^nB^nR^n} \equiv \frac{1}{M} \sum_{i=1}^M U_i \cdot \left[
(\rho^{ABR})^{\otimes n} \right],
\enq
A $(\rho,\err,n)$ deconstruction protocol ensures
\beq
\sup_{\cR^{B^n \to A^n B^n}} 
\Big\| \Upsilon^{A^n B^nR^n} -
\cR^{B^n \to A^n B^n}(\Upsilon^{B^nR^n})
\Big\|_1 \leqslant \err ~~ \mbox{and} ~~
\Big\| \Upsilon^{B^nR^n} - (\rho^{BR})^{\otimes n}
\Big\|_1 \leqslant \err,
\enq
where the supremum is taken over all the completely
positive and trace preserving (cptp)
recovery maps $\cR^{B^n \to A^n B^n}$.

A$(\rho,\err,n)$ conditional erasure protocol
ensures 
\beq
\Big\| \Upsilon^{A^nB^nR^n} - \tau^{A^n} \otimes \Upsilon^{B^nR^n}
\Big\|_1 \leqslant \err ~~ \mbox{and} ~~
\Big\| \Upsilon^{B^nR^n} - (\rho^{BR})^{\otimes n}
\Big\|_1 \leqslant \err,
\enq
where $\tau^{A^n}$ is a density matrix and we make
no apriori restrictions on the choice of $\tau^{A^n}$.

The number $(\log M)/n$ is called the {\bf rate} of the protocol.
A real number ${\mathbf{r}}$ is called an {\bf achievable rate} if protocols exist for $n \to \infty$ with
rate approaching
${\mathbf{r}}$ and the $\err$ approaching $0$.
\end{definition}

\begin{theorem}[Berta \emph{et al}, 2018 \cite{cond-corr-2018}]
The smallest achievable rate for both the
protocols is $I(A:R|B)_\rho$.
\end{theorem}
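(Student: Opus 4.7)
My plan splits the theorem into achievability and converse halves, and handles the two protocols in one shot via a reduction. First, conditional erasure is the stronger property: whenever $M$ unitaries satisfy $\|\Upsilon^{A^nB^nR^n}-\tau^{A^n}\otimes\Upsilon^{B^nR^n}\|_1\leqslant\err$ for some density matrix $\tau^{A^n}$, the constant recovery channel $\cR^{B^n\to A^nB^n}(\sigma):=\tau^{A^n}\otimes\sigma$ is manifestly CPTP and, when inserted into the deconstruction definition, bounds the supremum there by the same $\err$. Hence it will suffice to exhibit a conditional--erasure protocol at every rate above $I(A;R|B)_\rho$ and to prove the converse lower bound already for the weaker deconstruction constraint; combining these delivers $I(A;R|B)_\rho$ as the optimum for both.

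For achievability I would invoke the plain--vanilla decoupling theorem from Section~11 of \cite{sharma-decoupling-2015}, in its formulation through the map $\mapone$ and smoothed \renyi conditional entropies. The steps are: purify $\rho^{ABR}$ to $\Psi^{ABRE}$; pass to $\Psi^{\otimes n}$ on $A^nB^nR^nE^n$; draw unitaries on $A^nB^n$ from a Haar ensemble; and apply the decoupling bound to the composition of those unitaries with a suitable partial--isometry map $\mapone^{A^nB^n\to B^n}$. The bound controls $\Exp_U \|\tfrac{1}{M}\sum_i U_i\cdot(\rho^{\otimes n})-\tau^{A^n}\otimes\rho^{BR\otimes n}\|_1$ by an exponential in $\log M$ minus a difference of \renyi conditional entropies on $\Psi^{\otimes n}$. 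Using the pure-state duality $H_\alpha(A|RE)_\Psi=-H_{\widetilde\alpha}(A|B)_\Psi$ from \cite{tomamichel-duality-2013} together with the asymptotic equipartition property, that threshold collapses in the $n\to\infty$ limit to $n[H(A|B)_\rho-H(A|BR)_\rho]=nI(A;R|B)_\rho$, so any rate above this suffices. A routine derandomisation then extracts a deterministic $\{U_i\}$; since the tensor--product estimate is already with respect to the iid reference $\rho^{BR\otimes n}$, data processing under $\tr_{A^n}$ gives the second constraint $\|\Upsilon^{B^nR^n}-\rho^{BR\otimes n}\|_1\to 0$ for free.

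The converse is routine: the unmixed state $\rho^{\otimes n}$ satisfies $I(A^n;R^n|B^n)_{\rho^{\otimes n}}=nI(A;R|B)_\rho$ by additivity; the deconstruction target forces $I(A^n;R^n|B^n)_\Upsilon=o(n)$ via continuity of conditional mutual information in the $\Xi(\err)$ sense; and the drop in conditional mutual information under the classical mixing over the $M$ labels is bounded by $\log M$ by subadditivity on a copying register, yielding $\log M\geqslant nI(A;R|B)_\rho-o(n)$. The main obstacle, and the step I expect to need the most care, is tuning the output dimension of $W$ inside $\mapone$ so that the decoupling inequality produces the \emph{conditional} rate $I(A;R|B)_\rho$ and not the larger $I(A;BR)_\rho$ that blind randomisation of $A^n$ alone would give: the unitaries must be strong enough to kill the $A^n$--$R^n$ correlation but gentle enough to leave the $B^nR^n$ marginal close to iid. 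Calibrating the $|A|/|B|$ normalisation in $\mapone$ so that the dual conditional entropy on the $B$ side appears on the right--hand side of the decoupling bound is where the conditional rate drops out.
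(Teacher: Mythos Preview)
Your reduction of deconstruction to conditional erasure via the constant recovery channel is correct, and your converse sketch matches the paper (which dismisses it in a remark as ``standard arguments such as continuity''). The gap is in the achievability. Drawing the $U_i$ Haar on $A^nB^n$ scrambles $B^n$: the $B^nR^n$ marginal of $\Upsilon$ is then driven toward $\pi^{B^n}\otimes(\rho^{R})^{\otimes n}$, not $(\rho^{BR})^{\otimes n}$, so the second constraint fails outright. Moreover, a decoupling bound for $\mapone^{A^nB^n\to B^n}$ composed with a Haar unitary on $A^nB^n$ controls the distance of the \emph{output on $B^n$} to a product with $(\rho^R)^{\otimes n}$; it does not by itself say that $\Upsilon^{A^nB^nR^n}$ is close to $\tau^{A^n}\otimes(\rho^{BR})^{\otimes n}$. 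Tuning the $|A|/|B|$ prefactor in $\mapone$ only rescales a trace and cannot repair either issue.

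The mechanism the paper actually uses is a two-stage decoupling with an Uhlmann step in between, and the intermediate register $F$ with $\log|F|\approx nH_{\widetilde\alpha}(A|B)_\rho$ is the missing idea. First, one decoupling bound (on $A^n$ alone, against the \emph{complementary} reference $R^nE^n$) produces a single unitary $U$ on $A^n$ such that $\tr_F\circ\mapone^{A^n\to F}\big[U\cdot(\rho^{ARE})^{\otimes n}\big]\approx(\rho^{RE})^{\otimes n}$; this is where the duality $H_\alpha(A|RE)_\Psi=-H_{\widetilde\alpha}(A|B)_\Psi$ enters. Because the $R^nE^n$ marginals now agree, Uhlmann (Lemma~31 in \cite{sharma-decoupling-2015}) supplies a unitary $V_U$ on the purifying system $A^nB^n$ with $V_U\cdot\Psi^{\otimes n}\approx W^\dag\cdot\mapone\big[U\cdot\Psi^{\otimes n}\big]$; this is the only place $B^n$ is touched, and by construction it leaves $B^nR^n$ undisturbed. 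Second, $M$ Heisenberg--Weyl unitaries on $F$ (lifted to $A^n$ via $W^\dag$) are averaged; a second decoupling bound, now against $B^nR^n$, kills the $F$--$B^nR^n$ correlations once $\log M\gtrsim\log|F|-nH_\alpha(A|BR)_\rho$. Substituting the choice of $|F|$ gives the rate $H(A|B)_\rho-H(A|BR)_\rho=I(A;R|B)_\rho$. Your plan collapses these two decoupling applications into one and omits the Uhlmann step, which is precisely the device that reconciles ``kill the $A$--$R$ correlation'' with ``leave $B^nR^n$ intact''.
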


We prove the following theorem.

\begin{theorem}
For any $n \in \mathbb{N}$, there exist
$(\rho,\err,n)$ deconstruction and conditional
erasure protocols such that
for any $\delta > 0$, $\alpha \in (1,2]$ and $\ket{\Psi}^{ABRE}$ a purification of $\rho^{ABR}$,
\beq
\frac{\log M}{n} = H_{\widetilde{\alpha}}(A|B)_\rho -
H_{\alpha}(A | BR)_\rho +
(|E|+|B|)|R| \frac{\log(n+1)}{n} + \delta,
\enq
and the error approaches $0$ exponentially in $n$.
\end{theorem}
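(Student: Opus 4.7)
The plan is to adapt the proof of Section~11 of \cite{sharma-decoupling-2015}, which carries out the analogous construction for Quantum State Redistribution. Choose the mixing unitaries in the protocol to be of the product form $U_i = V_i^{A^n} \otimes \eye^{B^n}$; then tracing over $A^n$ automatically yields $\Upsilon^{B^nR^n} = (\rho^{BR})^{\otimes n}$ with zero error, so the second (marginal) error bound in both protocol definitions is satisfied exactly and only the decoupling bound on $\Upsilon^{A^nB^nR^n}$ requires work. Once the conditional erasure protocol is in hand, the deconstruction protocol follows via the trivial recovery map $\cR(\sigma^{B^n}) \equiv \tau^{A^n} \otimes \sigma^{B^n}$ with the same error, underscoring the central point of this comment that the plain decoupling approach alone suffices.

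To construct the $V_i$, fix a purification $\ket{\Psi}^{ABRE}$ of $\rho^{ABR}$ and work with $\ket{\Psi}^{\otimes n}$. First, I would sandwich the i.i.d.\ state between universal type projectors on $B^nR^n$ and on $E^n$, of dimensions at most $(n+1)^{|B||R|}$ and $(n+1)^{|E||R|}$ respectively; standard typicality incurs only an exponentially small trace-norm penalty, and the polynomial sizes of these subspaces are what produce the $(|E|+|B|)|R|\log(n+1)/n$ correction in the rate formula. Second, apply the one-shot decoupling lemma of \cite{sharma-decoupling-2015} to the type-projected state: for $V$ Haar-distributed on $A^n$ and the partial-isometry map $\mapone^{A^n \to B_2}$ with $|B_2|=|A^n|/M$, one obtains
\beq
\Exp_V \big\| \mapone(V \cdot \widetilde{\rho}^{A^n B^n R^n} \cdot V^\dagger) - \mapone(\pi^{A^n}) \otimes \widetilde{\rho}^{B^nR^n} \big\|_1 \leq \Xi\!\left( 2^{-n\delta/2} \right),
\enq
where tildes denote type-projected states; the right-hand side assumes this form precisely when $(\log M)/n$ meets the rate specified in the theorem, thanks to the pure-state duality $H_\alpha(A|RE)_\Psi = -H_{\widetilde{\alpha}}(A|B)_\Psi$ which relates the native $E^n$-side entropy appearing in the decoupling exponent to $H_{\widetilde{\alpha}}(A|B)_\rho$. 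Markov's inequality then extracts a single good Haar-random $V$.

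The final step --- promoting the single Haar-random $V$ to an explicit discrete ensemble of $M$ unitaries, as the protocol definition requires --- is where the main technical labor lies. The remedy, as in \cite{sharma-decoupling-2015}, is an operator-Chernoff concentration bound: with $V_1,\ldots,V_M$ drawn i.i.d.\ Haar, their empirical twirl concentrates around the Haar expectation exponentially fast in $M$ on the polynomial-dimensional support singled out by the type projectors, and since $M$ is itself exponentially large in $n$, a union bound combined with derandomization furnishes deterministic $V_1,\ldots,V_M$ satisfying the decoupling bound with exponentially small error. Setting $\tau^{A^n}$ equal to the resulting averaged $A^n$-marginal closes the argument and establishes the achievable rate for both protocols simultaneously.
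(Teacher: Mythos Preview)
Your proposal has a fundamental gap: by restricting the protocol unitaries to the form $U_i = V_i^{A^n}\otimes \eye^{B^n}$, you cannot reach the claimed rate. With $B^n$ untouched, the first condition of conditional erasure becomes
\[
\Big\| \tfrac{1}{M}\sum_i V_i^{A^n}\cdot(\rho^{ABR})^{\otimes n} - \tau^{A^n}\otimes (\rho^{BR})^{\otimes n}\Big\|_1 \leqslant \err,
\]
which is precisely the Groisman--Popescu--Winter erasure of correlations between $A$ and the joint reference $BR$; its optimal rate is $I(A;BR)_\rho = I(A;R|B)_\rho + I(A;B)_\rho$, strictly larger than the target whenever $I(A;B)_\rho>0$. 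The very feature you advertise --- that $\Upsilon^{B^nR^n}=(\rho^{BR})^{\otimes n}$ with zero error --- is the symptom: not touching $B^n$ forfeits exactly the $I(A;B)$ savings. The paper's proof does \emph{not} keep the unitaries on $A^n$; it uses two simultaneous decoupling bounds (one against $R^nE^n$, one against $B^nR^n$) and then invokes an Uhlmann step (Lemma~31 in \cite{sharma-decoupling-2015}) to produce a unitary $V_U$ on $A^nB^n$, so that the final protocol unitaries $V_i^{A^n}V_U$ act nontrivially on $B^n$. That is why the paper's marginal bound \eqref{yae2} carries a nonzero error $\Xi(\varepsilon_n)+\vartheta_n$ rather than zero.

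There are secondary issues as well. The decoupling inequality you display controls the state on $B_2B^nR^n$ after the map $\mapone^{A^n\to B_2}$, not on $A^nB^nR^n$; you never say how to return to $A^n$, and with $A^n$-only unitaries there is no Uhlmann purifier available on that side to do it. Your ``polynomial-dimensional support'' claim is also off: type projectors on $B^nR^n$ have \emph{exponential} rank; it is the \emph{number} of types that is $(n+1)^{|B||R|}$, which is what enters the rate formula via the eigenvalue multiplicity bound in the R\'enyi decoupling exponent, not via an operator-Chernoff dimension count. Consequently the concentration step you sketch would require $M$ of order $2^{nI(A;BR)}$, again too large. In short, the missing idea is the Uhlmann/purification step that moves part of the action onto $B^n$; without it the proposal achieves the wrong rate.
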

\begin{proof}
Consider a partial isometry $W^{A^n \to F}$,
$|F| \leqslant |A^n|$. For $M \leqslant |F|^2$, choose
$M$ \nolinebreak{Heisenberg-Weyl}
Unitaries $V_i^F$, and
let $\cV_M^{F \to F}$ be a cptp map given by
\beq
\cV_M(\sigma^F) \equiv \frac{1}{M} \sum_{i=1}^M V_i^F \cdot \sigma^F.
\enq
Then, from Corollary 2 in \cite{sharma-decoupling-2015}, 
for any $\alpha \in (1,2]$, there exists a Unitary $U$
over $A^n$ such that
\begin{multline}
\label{yae51}
\left\| \tr_F \circ \mapone \left[ U \cdot (\rho^{ARE})^{\otimes n} \right] - (\rho^{RE})^{\otimes n} \right\|_1
\\ \leqslant 8 \exp\Big\{ \frac{\alpha-1}{2\alpha} \big[ |R| |E| \log(n+1) - n H_{\alpha}(A|RE)_{\rho} - \log|F| \big] \Big\} \\
= 8 \exp\Big\{ \frac{\alpha-1}{2\alpha} \big[ |R| |E| \log(n+1) + n H_{\widetilde{\alpha}}(A|B)_{\rho} - \log|F| \big] \Big\}
\equiv \varepsilon_n,
\end{multline}
and
\begin{multline}
\label{yae52}
\left\| \cV_M \circ \mapone \left[ U \cdot (\rho^{ABR})^{\otimes n} \right] - \pi^F \otimes (\rho^{BR})^{\otimes n}
\right\|_1 \\
\leqslant 8 \exp\Big\{ \frac{\alpha-1}{2\alpha} \big[
|B| |R| \log(n+1) -
n H_\alpha(A|BR)_{\rho} - \log M + \log|F| \big] \Big\} \equiv \vartheta_n,
\end{multline}
where in \eqref{yae52}, we have also used Lemma 23 in
\cite{sharma-decoupling-2015}.
From \eqref{yae51} and Lemma 31 in
\cite{sharma-decoupling-2015}, we claim that there
exists a Unitary $V_U^{A^nB^n}$ over $A^nB^n$ such that
\beq
\label{yae53}
\left\| W^\dag \cdot \mapone \left[ U \cdot (\Psi^{ABRE})^{\otimes n} \right] -
V_U \cdot (\Psi^{ABRE})^{\otimes n} \right\|_1
\leqslant \Xi(\varepsilon_n).
\enq
Consider now the following Unitaries over $A^n$ constructed from $V_i^F$ as
$V_i^{A^n} = W^\dag \cdot V_i^F + (\eye^A - W^\dag W)$. Note that $V_i^{A^n} W^\dag = W^\dag V_i^F$.
We now claim that $V_i^{A^n} V_U$, $i=1,...,M$,
are precisely the $M$ Unitaries we need. For
\beq
\Upsilon^{A^n B^n R^n} \equiv
\frac{1}{M} \sum_{i=1}^M (V_i^{A^n} V_U) \cdot (\rho^{ABR})^{\otimes n},
\enq
we have
\begin{multline}
\label{yae1}
\Big\| \Upsilon^{A^n B^n R^n}
- (W^\dag \cdot \pi^F) \otimes (\rho^{BR})^{\otimes n} \Big\|_1 \\
\leqslant \left\| \frac{1}{M} \sum_{i=1}^M (V_i^{A^n} V_U) \cdot (\rho^{ABR})^{\otimes n}
- \frac{1}{M} \sum_{i=1}^M (V_i^{A^n} W^\dag) \cdot \mapone \left[ U \cdot (\rho^{ABR})^{\otimes n} \right] \right\|_1 + \\
\hspace{0.4in} \left\| \frac{1}{M} \sum_{i=1}^M (V_i^{A^n}
W^\dag) \cdot \mapone \left[ U \cdot (\rho^{ABR})^{\otimes n} 
\right] -
(W^\dag \cdot \pi^F) \otimes (\rho^{BR})^{\otimes n} \right\|_1 \\
\leqslant \frac{1}{M} \sum_{i=1}^M \Big\|  (V_i^{A^n} V_U) \cdot (\rho^{ABR})^{\otimes n}
- (V_i^{A^n} W^\dag) \cdot \mapone \left[ U \cdot 
(\rho^{ABR})^{\otimes n} \right] \Big\|_1 + \\
\hspace{0.83in}
\left\| \frac{1}{M} \sum_{i=1}^M (W^\dag V_i^B) \cdot \mapone \left[ U \cdot (\rho^{ABR})^{\otimes n} \right]
- (W^\dag \cdot \pi^F) \otimes (\rho^{BR})^{\otimes n} \right\|_1 \\
\leqslant \frac{1}{M} \sum_{i=1}^M \Big\|  V_U \cdot (\Psi^{ABRE})^{\otimes n}
- W^\dag \cdot \mapone \left[ U \cdot (\Psi^{ABRE})^{\otimes n} \right] \Big\|_1
+ \\
\hspace{0.43in}
\left\| \cV_M \circ \mapone \left[ U \cdot (\rho^{ABR})^{\otimes n} \right]
- \pi^F \otimes (\rho^{BR})^{\otimes n} \right\|_1 \\
\leqslant \Xi(\varepsilon_n) + \vartheta_n,
\hspace{3.8in}
\end{multline}
where the first inequality follows from the triangle inequality, in the second inequality, the
first term follows from the convexity of the trace norm and the second term follows by invoking
$V_i^{A^n} W^\dag = W^\dag V_i^B$, in the third inequality, the first term follows by invoking the
Unitary invariance and monotonicity of the trace norm
and the second term from monotonicity,
in the fourth inequality, the first term is upper bounded
using \eqref{yae53} and the second term is upper bounded using \eqref{yae52}.

Using monotonicity and \eqref{yae1}, we get
\beq
\label{yae2}
\Big\| \Upsilon^{B^n R^n}
- (\rho^{BR})^{\otimes n} \Big\|_1 \leqslant 
\Xi(\varepsilon_n) + \vartheta_n,
\enq
and using \eqref{yae1}, \eqref{yae2},
and triangle inequality, we get
\beq
\label{yae3}
\Big\| \Upsilon^{A^n B^n R^n} - (W^\dag \cdot \pi^F) 
\otimes \Upsilon^{B^n R^n} \Big\|_1 \leqslant
2 \, \Xi(\varepsilon_n) + 2 \vartheta_n, \\
\enq
which along with \eqref{yae2} proves the claim for the 
conditional erasure protocol. By substituting the
recovery map (given in \cite{cond-corr-2018}) in
\eqref{yae3} as
\beq
\cR^{B^n \to A^n B^n}(\Upsilon^{B^nR^n})
= W^\dag \cdot \pi^F \otimes \Upsilon^{B^nR^n},
\enq
and from \eqref{yae2}, the claim also follows
for the deconstruction protocol. 
\end{proof}

\noindent {\bf Remarks}:
\begin{itemize}
\item Unlike the proof in \cite{cond-corr-2018}, we
do not need any ancilla.
\item Unsurprisingly, with similar approach as
\cite{qcmi-lb-2-2015,for-mult-2016} that use QSR,
one could instead derive lower bounds to $I(A;R|E)_\rho$
using \eqref{yae1} since
for $V_{\cU}^{A^nB^n \to A^nB^nX}$ as the Stinespring
dilation isometry mocking up the application of $M$
Unitaries over $A^n B^n$ with $|X| = M$,
\eqref{yae1} implies the existence of
a recovery operation $\cE^{XE^n \to \widetilde{A}^nE^n}$
such that
\beq
\cE^{XE^n \to \widetilde{A}^nE^n} \circ \tr_{A^nB^n}
\circ V_{\cU}^{A^nB^n \to A^nB^nX}
[(\Psi^{ABRE})^{\otimes n}] \approx
\rho^{\widetilde{A}^nR^nE^n}
\enq
and noting that
\beq
\tr_{A^nB^nX}
\circ V_{\cU}^{A^nB^n \to A^nB^nX}
[(\Psi^{ABRE})^{\otimes n}] = (\rho^{RE})^{\otimes n}.
\enq
\item The converse can be proved by standard arguments
such as continuity and hence, contains no
surprises that go against this comment.
\end{itemize}

\end{document}